\documentclass{article}
\usepackage{amsmath,amssymb,array}
\usepackage{authblk}
\usepackage{amsthm}
\usepackage{verbatim}
\usepackage[dvips]{graphicx}
\usepackage{float}
\usepackage[dvips]{graphicx}
\usepackage[algoruled,linesnumbered,noend]{algorithm2e}
\usepackage[utf8]{inputenc}
\usepackage{tikz}
\usetikzlibrary{arrows,snakes,backgrounds}

\begin{document}

\title{On the Complexity of Minimum Labeling Alignment of Two Genomes}
\author[1]{Riccardo Dondi\thanks{riccardo.dondi@unibg.it}}
\author[2]{Nadia El-Mabrouk\thanks{mabrouk@iro.umontreal.ca}}
\affil[1]{Dipartimento di Scienze dei Linguaggi, della Comunicazione e degli Studi Culturali,
Universit\`a degli Studi di Bergamo, Bergamo - Italy}
\affil[2]{Département d’Informatique et Recherche Opérationnelle, Université de Montréal, Montréal - Canada}

%riccardo.dondi@unibg.it, mabrouk@iro.umontreal.ca}

\maketitle

\newcommand{\nota}[1]{{\sffamily\bfseries #1}\marginpar{\textbf{\framebox{X}}}}

\newcommand{\gx}{\ensuremath{\mathcal{X}}}
\newcommand{\gy}{\ensuremath{\mathcal{Y}}}

\newcommand{\tya}{\emph{type a labeling}}
\newcommand{\tyb}{\emph{type b labeling}}

\newcommand{\cluster}{\ensuremath{\mathcal{C}}}
\newcommand{\lchild}{\ensuremath{l}}
\newcommand{\rchild}{\ensuremath{r}}
\newcommand{\leaf}{\ensuremath{\Lambda}}
\newcommand{\leaveset}{\ensuremath{L}}
\newcommand{\radice}{\ensuremath{r}}
\newcommand{\lca}{\ensuremath{\text{lca}}}
\newcommand{\occ}{\ensuremath{\text{occ}}}

%\begin{document}
\newfloat{Algorithm}{h}{loa}
\newtheorem{Thm}{Theorem}[section]
\newtheorem{lemma}[Thm]{Lemma}
\newtheorem{Prop}[Thm]{Proposition}
\newtheorem{Obs}[Thm]{Observation}
\newtheorem{remark}[Thm]{Remark}
\newtheorem{Cor}[Thm]{Corollary}
\newtheorem{Cl}[Thm]{Claim}

\theoremstyle{definition}
\newtheorem{Def}{Definition}[section]
\newtheorem{PB}{Problem}
\newtheorem{Example}{Example}[section]

\begin{abstract}

In this note we investigate the complexity of the Minimum Label
Alignment problem and we show that such a problem is APX-hard.

\end{abstract}

\section{Introduction}
\label{sec:intro}

In this note we consider the computational (and approximation) 
complexity of the Minimum Label Alignment problem. This problem has been
recently introduced in bioinformatics to deal with
the inference of evolutionary scenarios for genome organization \cite{Recomb2012}.
In this note we show that the problem is APX-hard, even when the genome
contains at most five occurrences of the same gene.
The results implies that the Duplication-Loss Alignment problem 
and the Two Species Small Phylogeny problem introduced in \cite{Recomb2012}
are not in even in NP.

Next, we introduce some preliminary definitions.
A genome is considered as a string over alphabet $\Sigma$.
The i-th character of a genomes $\gx$ is denoted by $\gx_i$.
Two aligned genomes $\gx$, $\gy$ are two aligned strings 
 over alphabet $\Sigma^- = \Sigma \cup \{ - \}$
(where $-$ denotes a gap in the alignment) such that
if $\gx_i \neq -$ and $\gy_i \neq -$, then $\gx_i = \gy_i$
and $\gx_i$, $\gy_i$ cannot be both equal to $-$.
Two aligned genomes can be seen as a matrix of size
$2 \times m$ (where $m$ is the size of the alignment).
A column is a \emph{match} if it does not contain a gap. 
A \emph{labeling} of an aligned genome $\gx$ is an
interpretation of the unmatched characters of $\gx$ in terms of duplications
and losses. 
A duplication can be represented as a directed arc from a substring
of $\gx$ to a different identical substring of $\gx$.
A labeling is \emph{feasible} if it induces no cycle.
Consider a duplication in $\gx$ from a substring $s$ to a substring $t$.
Such a duplication is called \emph{maximal} if $s$ and $t$ are two identical
maximal substrings in $\gx$, that is if the characters on the left of $s$ and $t$ in $\gx$ 
are different (or one of these characters does not exist) and if the characters on the right of
 $s$ and $t$ in $\gx$
are different (or one of these characters does not exist).

Giving a cost function $c$ that defines the cost of the possible operations considered (duplications and losses), the cost of a labeling of $\gx$, $\gy$ is the sum
of the costs of the underlying operations.

We investigate the complexity of the following problem.
\begin{PB}
\label{Problem:MLA} \textbf{Minimum Labeling Alignment}[MLA]\\
\textbf{Input}: two aligned genomes $\gx$ and $\gy$.\\
\textbf{Output}: a minimum cost feasible labeling $L$ of $\gx$ and $\gy$.
\end{PB}

In what follows, given a graph $G=(V,E)$ and a vertex $v \in V$, 
we denote by $N(v)$ the set of vertices adjacent to $v$ in $G$.
A graph $G=(V,E)$ is cubic when $N(v)=3$ for each $v \in V$.

\section{Hardness of Minimum Labeling Alignment}
\label{section:MLA-APX-hard}
%%%%%%%%%%%%%%%%%%%%%

We prove that the MLA problem is APX-hard, even if each symbol (gene)
has at most $5$ occurrences in $\gx$ or $\gy$, by giving a reduction 
(more precisely an $L$-reduction \cite{Ausiello-Crescenzi}) from the Minimum Vertex Cover problem 
on Cubic graphs (MVCC) to MLA. Notice that MVCC is known to be APX-hard \cite{AK92}.

%%%%%%%%%%%%%%%%%%

\begin{PB}
\label{Problem:MinVC} \textbf{Minimum Vertex Cover Problem on Cubic graph} [MVCC]\\
\textbf{Input}: a cubic graph $G=(V,E)$, where $V = \{ v_1, \dots, v_n \}$.\\
\textbf{Output}: a minimum cardinality set $V' \subseteq V$, 
such that for each $\{ v_i, v_j \} \in E$, 
at least one of $v_i$, $v_j$ belongs to $V'$.
\end{PB}

Next, we present the L-reduction from MVCC to MLA.
Let $G=(V,E)$ be a cubic graph. Define the following ordering on the
edges in $E$: $\{ v_i,v_j \} < \{ v_x, v_y \}$ if and only if
$i < x$, or (in case $i = x$) $ j < y$.
We denote by $\{v_1, v_a\}$
and $\{v_z, v_w\}$ the first and the last edges of $E$.
Notice that, based on this ordering, we denote the edges incident on $v_i$, as the first, 
the second and the third edges of $v_i$.
Furthermore, in what follows, given a vertex $v_i \in V$, we denote with
$\{ v_i,v_j \}$, $\{ v_i,v_h \}$, $\{ v_i,v_k \}$ the three
edges of $G$ incident on $v_i$.

Now, we define the corresponding 
aligned genomes $\gx$ and $\gy$ as follows.
First, we present an overview of the construction of $\gx$ and $\gy$.
The aligned genomes $\gx$ and $\gy$ consists of two \emph{parts} 
and each part is then divided
into \emph{blocks} (that is substrings): 
the leftmost part is called the \emph{Vertex-Edge-Set Part} (VE-Part), 
the rightmost part is called the \emph{Auxiliary Part} (A-Part) 
(see Fig. \ref{fig:struct}).

In the VE-part each position of $\gx$ is different from $-$, 
while $\gy$ contains some gaps. 
Each position of $\gx$ and $\gy$ in the A-part is a match, hence
$\gx$ and $\gy$ are identical in the A-part. 
By construction each position of the aligned genome $\gy$ 
is either a gap or it is a match, 
hence the characters of $\gy$ do not need any labeling. 
It follows that the definition of a labeling of $\gx$ and
$\gy$ is computed by labeling the unmatched elements in the VE-part of $\gx$.

The VE-part of $\gx$ and $\gy$ consists of the concatenation 
of $|V|+|E|$ blocks (see Fig. \ref{fig:struct}):
one block $B_{\gx-VE}(e_{i,j})$ ($B_{\gy-VE}(e_{i,j})$ respectively) in $\gx$
($\gy$ respectively) for each edge $\{ v_i, v_j \} \in E$, 
one block $B_{\gx-VE}(v_i)$ ($B_{\gy-VE}(v_i)$ respectively) in $\gx$
(in $\gy$ respectively) for each vertex $v_i \in V$. 

The A-part of $\gx$ and $\gy$ consists of the concatenation of $2|V|$ blocks (see Fig. \ref{fig:struct}): 
two blocks $B_{\gx,A,1}(v_i)$, $B_{\gx,A,2}(v_i)$ 
($B_{\gy,A,1}(v_i)$, $B_{\gy,A,2}(v_i)$ respectively) in  $\gx$ 
(in $\gy$ respectively), for each $v_i \in V$.

\begin{figure}
\[
\gx = \underbrace{B_{\gx-VE}(v_1) \dots B_{\gx-VE}(v_n) B_{\gx-VE}(e_{1,a}) \dots 
B_{\gx-VE}(e_{z,w})}_{\text{VE-part}} \cdot
\]
\[
\cdot \underbrace{B_{\gx,A,1}(v_1) B_{\gx,A,2}(v_1)
\dots B_{\gx,A,1}(v_n) B_{\gx,A,2}(v_n)}_{\text{A-part}}
\]

\[
\gy = \underbrace{B_{\gy-VE}(v_1) \dots B_{\gy-VE}(v_n) B_{\gy-VE}(e_{1,a}) \dots 
B_{\gy-VE}(e_{z,w})}_{\text{VE-part}} \cdot
\]
\[
\cdot \underbrace{B_{\gy,A,1}(v_1) B_{\gy,A,2}(v_1)
\dots B_{\gy,A,1}(v_n) B_{\gy,A,2}(v_n)}_{\text{A-part}}
\]

\caption{The structure of $\gx$ and $\gy$.}
\label{fig:struct}
\end{figure}

Now, we define the specific values of the blocks of $\gx$ and $\gy$.
Given an edge $\{ v_i, v_j \} \in E$, 
where $i < j$ and $\{ v_i, v_j \}$ is the $p$-th edge of $v_i$ 
and the $q$-th of $v_j$, $1 \leq p \leq 3$ and $1 \leq q \leq 3$, 
we define its associated blocks  $B_{\gx-VE}(e_{i,j})$,
$B_{\gy-VE}(e_{i,j})$. 
The block $B_{\gx-VE}(e_{i,j})$ is defined as follows:
\[
B_{\gx-VE}(e_{i,j}) = s_{e,i,j} x_{i,p} e_{i,j,1} e_{i,j,2} x_{j,q}
\]
The block $B_{\gy-VE}(e_{i,j})$ is defined as follows:
\[
B_{\gy-VE}(e_{i,j}) = s_{e,i,j} (-)^{4}
\]
Hence notice that $B_{\gx-VE}(e_{i,j})$ contains $4$ unmatched characters,
that is the characters $x_{i,p}$, $e_{i,j,1}$, $e_{i,j,2}$, $x_{j,q}$.

Now, we define the block $B_{\gx-VE}(v_i)$, with $v_i \in V$.
The $i$-encoding of $\{v_i, v_j \}$, $i-enc_{i,j}$, is defined as follows:
\begin{itemize}
\item $i-enc_{i,j} = x_{i,p} e_{i,j,1} e_{i,j,2}$
\end{itemize}
and $i-enc_{i,j}[l]=x_{i,p}$, $i-enc_{i,j}[r]= e_{i,j,1} e_{i,j,2}$.
The $j$-encoding of $\{v_i, v_j \}$, $j-enc_{i,j}$, is defined as follows:
\begin{itemize}
\item $j-enc_{i,j} = e_{i,j,1} e_{i,j,2} x_{j,q}$
\end{itemize}
and $j-enc_{i,j}[l]= e_{i,j,1} e_{i,j,2}$, $j-enc_{i,j}[r]= x_{j,q}$.

The block $B_{\gx-VE}(v_i)$ is defined as follows:
\[
B_{\gx-VE}(v_i) = s_i z_{i,1} z_{i,2}~i-enc_{i,j}~z_{i,3} z_{i,4}~i-enc_{i,h}~z_{i,5} z_{i,6}~i-enc_{i,k} ~z_{i,7} z_{i,8}
\]

The block $B_{\gy-VE}(v_i)$ is defined as follows:
\[
B_{\gy-VE}(v_i) = s_i (-)^{17}
\]
Hence notice that $B_{\gx-VE}(v_i)$ contains $17$ unmatched characters,
that is the substring 
$z_{i,1} z_{i,2}~i-enc_{i,j}~z_{i,3} z_{i,4}~i-enc_{i,h}~z_{i,5} z_{i,6}~i-enc_{i,k}~z_{i,7} z_{i,8}$.

Now, we define the A-part of $\gx$ and $\gy$. Recall that $\gx$ and
$\gy$ are identical in the A-part.
The block $B_{\gx,A,1}(v_i)$ is defined as follows:
\[
B_{\gx,A,1}(v_i) = w_{i,1} z_{i,1} z_{i,2} w_{i,2} z_{i,3} z_{i,4}
w_{i,3} z_{i,5} z_{i,6} w_{i,4} z_{i,7} z_{i,8}
\]
The block $B_{\gy,A,1}(v_i)$ is identical to $B_{\gx,A,1}(v_i)$.

The block $B_{\gx,A,2}(v_i)$ is defined as follows:
\[
B_{\gx,A,2}(v_i) = u_{i,1} z_{i,2}~i-enc_{i,j}[l]~u_{i,2}~i-enc_{i,j}[r]~z_{i,3} 
u_{i,3} z_{i,4}~i-enc_{i,h}[l]~u_{i,4}~i-enc_{i,h}[r]~z_{i,5} \cdot
\]
\[
\cdot u_{i,5} z_{i,6}~i-enc_{i,k}[l]~u_{i,6}~i-enc_{i,k}[r]~z_{i,7}
\]
The block $B_{\gy,A,2}(v_i)$ is identical to $B_{\gx,A,2}(v_i)$.

\begin{Example}
\label{Ex1}
A cubic graph $G=(V,E)$ and the the corresponding genome $\gx$.

\begin{tikzpicture}[scale=.99,shorten <=1pt,shorten >=1pt]
    \tikzstyle{vertex} = [circle,draw,fill=black,minimum
    size=5pt,inner sep=2pt]
    \tikzstyle{edge} = [draw,-]
 \node at (-1,2.9) {\large{$G$}};
 \node[vertex] (v1) at (0,0) [label=below:$v_1$] {};
 \node[vertex] (v2) at (0,2) [label=above:$v_2$] {};
 \node[vertex] (v3) at (2,2) [label=above:$v_3$] {};
 \node[vertex] (v4) at (2,0) [label=below:$v_4$] {};
 
  \node (vs) at (4,0) {};

   \foreach \source/ \dest in{v1/v2, v1/v3, v2/v3, v3/v4, v1/v4}
   \path[edge] (\source) -- (\dest);%
   \draw (v2) .. controls (-2,0) and (0,-2) .. (v4);
\end{tikzpicture}

First, we define the blocks $B_{X-VE}(e_{i,j})$ associated with edges 
$\{ v_i,v_j \} \in E$

\begin{itemize}
\item $B_{\gx-VE}(e_{1,2}) = s_{e,1,2} x_{1,1} e_{1,2,1} e_{1,2,2} x_{2,1}$

\item $B_{\gx-VE}(e_{1,3}) = s_{e,1,3} x_{1,2} e_{1,3,1} e_{1,3,2} x_{3,1}$

\item $B_{\gx-VE}(e_{1,4}) = s_{e,1,4} x_{1,3} e_{1,4,1} e_{1,4,2} x_{4,1}$

\item $B_{\gx-VE}(e_{2,3}) = s_{e,2,3} x_{2,2} e_{2,3,1} e_{2,3,2} x_{3,2}$

\item $B_{\gx-VE}(e_{2,4}) = s_{e,2,4} x_{2,3} e_{2,4,1} e_{2,4,2} x_{4,2}$

\item $B_{\gx-VE}(e_{3,4}) = s_{e,3,4} x_{3,3} e_{3,4,1} e_{3,4,2} x_{4,3}$

\end{itemize}

Now, in order to define the block $B_{\gx-VE}(v_i)$, with $v_i \in V$, we have to
define the encoding of $i-enc_{i,j}$, $j-enc_{i,j}$:

\begin{itemize}

\item $1-enc_{1,2} = x_{1,1} e_{1,2,1} e_{1,2,2}$; $2-enc(1,2) = e_{1,2,1} e_{1,2,2} x_{1,2}$

\item $1-enc_{1,3} = x_{1,2} e_{1,3,1} e_{1,3,2}$; $3-enc(1,3) = e_{1,3,1} e_{1,3,2} x_{3,1}$

\item $1-enc_{1,4} = x_{1,3} e_{1,4,1} e_{1,4,2}$; $4-enc(1,4) = e_{1,4,1} e_{1,4,2} x_{4,1}$

\item $2-enc_{2,3} = x_{2,2} e_{2,3,1} e_{2,3,2}$; $3-enc(2,3) = e_{2,3,1} e_{2,3,2} x_{3,2}$

\item $2-enc_{2,4} = x_{2,3} e_{2,4,1} e_{2,4,2}$; $4-enc(2,4) = e_{2,4,1} e_{2,4,2} x_{4,2}$

\item $3-enc_{3,4} = x_{3,3} e_{3,4,1} e_{3,4,2}$; $4-enc(3,4) = e_{3,4,1} e_{3,4,2} x_{4,3}$
\end{itemize}
\noindent
\begin{tikzpicture}[scale=.5,shorten <=1pt,shorten >=1pt]
    \tikzstyle{vertex} = [circle,draw,fill=black,minimum
    size=5pt,inner sep=2pt]
    \tikzstyle{edge} = [draw,-]

\node at (-10,5) {\footnotesize{$B_{\gx-VE}(e_{1,3}) = s_{e,1,3} x_{1,2} e_{1,3,1} e_{1,3,2} x_{3,1}$}};  

\node at (-18,3) {\footnotesize{$B_{\gx-VE}(e_{1,2}) = s_{e,1,2} x_{1,1} e_{1,2,1} e_{1,2,2} x_{2,1}$}};    
    
\node at (-5, 3) {\footnotesize{$B_{\gx-VE}(e_{1,4}) = s_{e,1,4} x_{1,3} e_{1,4,1} e_{1,4,2} x_{4,1}$}};   

\node at (-13, 1) {\footnotesize{$B_{\gx,A,1}(v_1) = w_{1,1} z_{1,1} z_{1,2} w_{1,2} z_{1,3} z_{1,4}
w_{1,3} z_{1,5} z_{1,6} w_{1,4} z_{1,7} z_{1,8}$}};   
     
\node at (-21,-0.2){ \scriptsize{$\tya~~ \rightarrow$}};
\node at (-11,-1) {\footnotesize{$B_{\gx-VE}(v_1) = s_1 z_{1,1} z_{1,2} x_{1,1} e_{1,2,1} e_{1,2,2} 
z_{1,3} z_{1,4} x_{1,2} e_{1,3,1} e_{1,3,2} z_{1,5} z_{1,6} 
x_{1,3} e_{1,4,1} e_{1,4,2} z_{1,7} z_{1,8}$}};

\node at (-21,-2){ \scriptsize{$\tyb~~ \rightarrow$}};

\node at (-9,-3) {\footnotesize{$B_{\gx,A,2}(v_1) = 
u_{1,1} z_{1,2} x_{1,1} u_{1,2} e_{1,2,1} e_{1,2,2} 
z_{1,3} u_{1,3} z_{1,4} x_{1,2} u_{1,4} e_{1,3,1} e_{1,3,2} z_{1,5} u_{1,5}
z_{1,6} x_{1,3} u_{1,6} 
e_{1,4,1} e_{1,4,2} z_{1,7}$}};

%% type a sol brace
\draw[snake=brace,red,thick] (-18.5,-.7)--(-16.6,-.7);
\draw[snake=brace,red,thick] (-16.5,-.7)--(-12.8,-.7);
\draw[snake=brace,red,thick] (-12.7,-.7)--(-10.8,-.7);
\draw[snake=brace,red,thick] (-10.7,-.7)--(-6.9,-.7);
\draw[snake=brace,red,thick] (-6.8,-.7)--(-4.9,-.7);
\draw[snake=brace,red,thick] (-4.8,-.7)--(-1.1,-.7);
\draw[snake=brace,red,thick] (-1,-.7)--(1.3,-.7);

%% type a sol brace 2
\draw[snake=brace,mirror snake,red,thick] (-9.65,4.7)--(-5.65,4.7);
\draw[snake=brace,mirror snake,red,thick] (-17.6,2.7)--(-13.7,2.7);
\draw[snake=brace,mirror snake,red,thick] (-4.55,2.7)--(-0.75,2.7);
\draw[snake=brace,mirror snake,red,thick] (-16.4,0.7)--(-14.3,0.7);
\draw[snake=brace,mirror snake,red,thick] (-13.2,0.7)--(-11.1,0.7);
\draw[snake=brace,mirror snake,red,thick] (-10,0.7)--(-8,0.7);
\draw[snake=brace,mirror snake,red,thick] (-7,0.7)--(-4.88,0.7);

%% type b sol brace
\draw[snake=brace,mirror snake,red,thick] (-18.5,-1.4)--(-17.6,-1.4);
\node at (-18.1,-1.9) {\tiny{$L$}}; 
\draw[snake=brace,mirror snake,red,thick] (-17.5,-1.4)--(-15.55,-1.4);
\draw[snake=brace,mirror snake,red,thick] (-15.45,-1.4)--(-11.65,-1.4);
\draw[snake=brace,mirror snake,red,thick] (-11.55,-1.4)--(-9.6,-1.4);
\draw[snake=brace,mirror snake,red,thick] (-9.5,-1.4)--(-5.8,-1.4);
\draw[snake=brace,mirror snake,red,thick] (-5.7,-1.4)--(-3.7,-1.4);
\draw[snake=brace,mirror snake,red,thick] (-3.6,-1.4)--(0.1,-1.4);
\draw[snake=brace,mirror snake,red,thick] (0.2,-1.4)--(1.1,-1.4);
\node at (0.6,-1.9) {\tiny{$L$}}; 

%% type b sol brace 2
\draw[snake=brace,red,thick] (-18,-2.7)--(-16,-2.7);
\draw[snake=brace,red,thick] (-14.8,-2.7)--(-11.4,-2.7);
\draw[snake=brace,red,thick] (-10,-2.7)--(-8,-2.7);
\draw[snake=brace,red,thick] (-6.7,-2.7)--(-3.1,-2.7);
\draw[snake=brace,red,thick] (-1.9,-2.7)--(0,-2.7);
\draw[snake=brace,red,thick] (1.1,-2.7)--(5,-2.7);

\draw [->,dotted] (-7.65,4.5) -- (-7.65,2) -- (-8.75,2) -- (-8.75,-0.5);  %%13

\draw [->,dotted] (-15.65,2.5) -- (-15.65,1.7) -- (-14.67,0) -- (-14.67,-0.5);  %%12

\draw [->,dotted] (-2.65,2.5) -- (-2.65,0) -- (-2.95,0) -- (-2.95,-0.5);  %% 14

\draw [->,dotted] (-15.35,.5) -- (-15.35,0) -- (-17.55,0) -- (-17.55,-0.5);  %%z12

\draw [->,dotted] (-12.12,.5) -- (-12.12,0) -- (-11.75,0) -- (-11.75,-0.5);  %%z34

\draw [->,dotted] (-9,.5) -- (-9,0.2) -- (-5.8,0) -- (-5.8,-0.5);  %%z56

\draw [->,dotted] (-5.95,.5) -- (-5.95,0.3) -- (0.13,0.1) -- (0.13,-0.5);  %%z78
%   \draw (v2) .. controls (-2,0) and (0,-2) .. (v4);

%%%%%%%%%%%%%%%%%%%%%%%%%%%%%

\draw [->,dotted] (-17.01,-2.5) -- (-17.01,-2) -- (-16.56,-2) -- (-16.56,-1.6);  %%12,1

\draw [->,dotted] (-13.07,-2.5) -- (-13.07,-2) -- (-13.54,-2) -- (-13.54,-1.6);  %%12,3

\draw [->,dotted] (-9.01,-2.5) -- (-9.01,-2) -- (-10.56,-2) -- (-10.56,-1.6);  %%4,12

\draw [->,dotted] (-4.93,-2.5) -- (-4.93,-2) -- (-7.62,-2) -- (-7.62,-1.6);  %%13,5

\draw [->,dotted] (-0.93,-2.5) -- (-0.93,-2.2) -- (-4.67,-1.9) -- (-4.67,-1.6);  %%16,3

\draw [->,dotted] (3.05,-2.5) -- (3.05,-2.2) -- (1.05,-2.2) -- (-1.72,-1.9) -- (-1.72,-1.6);  %%12,1

\end{tikzpicture}

\vspace*{.5cm}

A $\tya$ for $B_{\gx-VE}(v_1)$ (in the upper part) and a $\tyb$ for $B_{\gx-VE}(v_1)$ (in the lower part).

\begin{comment}

Now, we can define the blocks $B_{\gx-VE}(v_i)$:
\begin{itemize}
\item $B_{\gx-VE}(v_1) = s_1 z_{1,1} z_{1,2} x_{1,1} e_{1,2,1} e_{1,2,2} 
z_{1,3} z_{1,4} x_{1,2} e_{1,3,1} e_{1,3,2} z_{1,5} z_{1,6} 
x_{1,3} e_{1,4,1} e_{1,4,2} z_{1,7} z_{1,8}$	
\item $B_{\gx-VE}(v_2) = s_2 z_{2,1} z_{2,2} e_{1,2,1} e_{1,2,2} x_{1,2} 
z_{2,3} z_{2,4} x_{2,2} e_{2,3,1} e_{2,3,2} z_{2,5} z_{2,6} 
x_{2,3} e_{2,4,1} e_{2,4,2} z_{2,7} z_{2,8}$		
\item $B_{\gx-VE}(v_3) = s_3 z_{3,1} z_{3,2} e_{1,3,1} e_{1,3,2} x_{3,1}
z_{3,3} z_{3,4} e_{2,3,1} e_{2,3,2} x_{3,2} z_{3,5} z_{3,6} 
x_{3,3} e_{3,4,1} e_{3,4,2} z_{3,7} z_{3,8}$	
\item $B_{\gx-VE}(v_4) = s_4 z_{4,1} z_{4,2} e_{1,4,1} e_{1,4,2} x_{4,1}
z_{3,3} z_{3,4} e_{2,4,1} e_{2,4,2} x_{4,2} z_{4,5} z_{4,6} 
e_{3,4,1} e_{3,4,2} x_{4,3} z_{3,7} z_{3,8}$	
\end{itemize}	

\end{comment}

\qed
\end{Example}

Now, we define the cost $c$ of labeling the aligned genome $\gx$ (recall that
$\gy$ does not need any labeling). 
Given an integer $z > 1$, then  the cost of a duplication of length $z$ is  
$c(D(z)) = 1$, while the cost of a loss of length $z$ is $c(L(z)) = z$.

Before giving the details of the proof, we give a high-level description of the reduction.
We will show that each block $B_{\gx-VE}(v_i)$ can be labeled essentially
in two possible ways (see Remark \ref{Rem:SolV} and Example \ref{Ex1}): 
\begin{enumerate}
\item with a $\tya$, defining maximal duplications 
from $B_{\gx-VE}(e_{i,j})$, $B_{\gx-VE}(e_{i,h})$, $B_{\gx-VE}(e_{i,k})$, $B_{\gx,A,1}(v_i)$
to $B_{\gx-VE}(v_i)$; a $\tya$ is the optimal labeling of $B_{\gx-VE}(v_i)$ 
(see Lemma \ref{lem:sol-loc-vertex}); 
\item with a $\tyb$,  defining maximal duplications in $B_{\gx-VE}(v_i)$
from the block $B_{\gx,A,2}(v_i)$ to  $B_{\gx-VE}(v_i)$; a $\tyb $ is a suboptimal
labeling of $B_{\gx-VE}(v_i)$ (see Lemma \ref{lem:sol-loc-vertex}).
\end{enumerate}
Thanks to the property of block $B_{\gx-VE}(e_{i,j})$ 
(see Remark \ref{Rem:SolE} and Lemma \ref{lem:sol-loc-edges}),
we will able to relate these two type of labelings with a cover of $G$ (see Lemma \ref{lem:hardness1}
and Lemma \ref{lem:hardness2}): a $\tyb$ for $B_{\gx-VE}(v_i)$ corresponds to a vertex $v_i$ 
in a vertex cover $V'$ of $G$, a $\tya$ for $B_{\gx-VE}(v_i)$ corresponds to a vertex $v_i$ 
in $V \setminus V'$ of $G$.

Now, we give the details of the reduction. First, we introduce 
some preliminaries properties of $\gx$ and $\gy$.

\begin{remark}
\label{Rem:SolV}
Given a cubic graph $G=(V,E)$, let $v_i$ be a vertex of $V$
such that $\{ v_i , v_j \}$, $\{ v_i , v_h \}$, $\{ v_i , v_k \}$
are the first, the second and the third edges of $v_i$ respectively.
Let $(\gx,\gy)$ be the corresponding instance of MLA.
The following labeling of $B_{\gx-VE}(v_i)$ 
(denoted as a $\tya$ for
$B_{\gx-VE}(v_i)$) has a cost of $7$ (it consists of $7$ duplications): 
\begin{itemize}
\item four duplications coming from the block $B_{\gx,A,1}(v_i)$, 
for the substrings $z_{i,2p-1}, z_{i,2p}$, $1 \leq p \leq 4$;
\item three duplications coming from the blocks 
$B_{\gx-VE}(e_{ij})$ (for the substring $i-enc_{i,j}$), 
$B_{\gx-VE}(e_{ih})$ (for the substring $i-enc_{i,h}$), $B_{\gx-VE}(e_{ik})$ 
(for the substring $i-enc_{i,k}$).
\end{itemize}
The following labeling of $B_{\gx-VE}(v_i)$ 
(denoted as a $\tyb$ for $B_{\gx-VE}(v_i)$) has a cost of $8$ (it consists of $6$ duplications and $2$ losses): 
\begin{itemize}
\item six duplications from 
$B_{\gx,A,2}(v_i)$ to $B_{\gx-VE}(v_i)$  
(substrings $z_{i,2}~i-enc_{i,j}[l]$, $i-enc_{i,j}[r]~z_{i,3}$,
$z_{i,4}~i-enc_{i,h}[l]$, $i-enc_{i,h}[r]~z_{i,5}$, 
$z_{i,6}~i-enc_{i,k}[l]$, $i-enc_{i,k}[r]~z_{i,7})$;
\item two losses for the two substrings $z_{i,1}$ and $z_{i,8}$.
\end{itemize}
Notice that in a $\tyb$ for $B_{\gx-VE}(v_i)$, there is no
duplication of $B_{\gx-VE}(v_i)$ from 
substrings of $B_{\gx-VE}(e_{ij})$, $B_{\gx-VE}(e_{ih})$, $B_{\gx-VE}(e_{ik})$.
\end{remark}

\begin{remark}
\label{Rem:SolE}
Let $G=(V,E)$ be a cubic graph, let $\{v_i, v_j \} \in E$, with $i < j$, be the
$p$-th edge of $v_i$, $1 \leq p \leq 3$, and the $q$-th edge of $v_j$, $1 \leq q \leq 3$.
Let $(\gx,\gy)$ be the corresponding instance of MLA.
The following labeling of $B_{\gx-VE}(e_{i,j})$ has cost $2$: 
\begin{itemize}
\item one duplication coming  either from $B_{\gx-VE}(v_i)$ (for the substring $x_{i,p} e_{i,j,1} e_{i,j,2}$) 
or from $B_{\gx-VE}(v_j)$ (for the substring  $e_{i,j,1} e_{i,j,2} x_{j,q}$);
\item one loss either for the last character of $B_{\gx-VE}(e_{i,j})$ or for 
the second character of $B_{\gx-VE}(e_{i,j})$ 
(that is the unmatched character of $B_{\gx-VE}(v_j)$ not involved in the duplication).
\end{itemize}
\end{remark}

Now, we are ready to show that a $\tya$ is the only optimal labeling 
for $B_{\gx-VE}(v_j)$.

\begin{lemma}
\label{lem:sol-loc-vertex}
Let $G=(V,E)$ be an instance of MVCC and let $(\gx,\gy)$ be the 
corresponding instance of MLA. Then, given a block $B_{\gx-VE}(v_i)$, with $v_i \in V$:
(1) any feasible  labeling of  $B_{\gx-VE}(v_i)$ has a
cost of at least $7$; 
(2) if a labeling  has cost of $7$, then such a labeling
is a $\tya$ for $B_{\gx-VE}(v_i)$.
\end{lemma}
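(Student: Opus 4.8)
The plan is to recast the statement as a weighted packing problem on the $17$ unmatched positions of $B_{\gx-VE}(v_i)$, namely the run $z_{i,1}z_{i,2}\,i\text{-}enc_{i,j}\,z_{i,3}z_{i,4}\,i\text{-}enc_{i,h}\,z_{i,5}z_{i,6}\,i\text{-}enc_{i,k}\,z_{i,7}z_{i,8}$ (the symbol $s_i$ is matched and needs no label). In any labeling each such position is explained by exactly one operation, so the duplication targets and the lost positions partition the run. If the labeling uses $d$ duplications covering a set $S$ of positions, then — since a loss of length $z$ costs $z$ — the losses contribute $17-|S|$ and the total cost is $d+(17-|S|)=17-\sum_{k}(L_k-1)$, where $L_k\ge 2$ is the length of the $k$-th duplicated segment (recall $c(D(z))$ is defined only for $z>1$, so every duplication has length at least $2$). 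Minimizing cost is therefore the same as maximizing $\sum_k(L_k-1)$ over a family of pairwise disjoint \emph{duplicatable} segments, i.e. contiguous substrings of the run that occur identically elsewhere in $\gx$; the acyclicity constraint only shrinks the set of admissible labelings, so any lower bound obtained this way applies to every feasible labeling.

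The first and main step is to pin down the duplicatable segments. I would establish, by inspecting the construction, the following facts (a formal version of Remark~\ref{Rem:SolV}): every duplicatable segment has length at most $3$; the length-$3$ segments are exactly the six substrings $x_{i,p}e_{i,j,1}e_{i,j,2}$ and $e_{i,j,1}e_{i,j,2}z_{i,3}$ together with the two analogous pairs built from $i\text{-}enc_{i,h}$ and $i\text{-}enc_{i,k}$; and the extreme positions $z_{i,1}$ and $z_{i,8}$ lie in a repeated substring only through the length-$2$ segments $z_{i,1}z_{i,2}$ and $z_{i,7}z_{i,8}$. The mechanism is that the separator symbols private to each block — the $w_{i,\ast}$ in $B_{\gx,A,1}(v_i)$, the $u_{i,\ast}$ in $B_{\gx,A,2}(v_i)$, and $s_{e,i,j}$ and the opposite-endpoint symbol $x_{j,q}$ in $B_{\gx-VE}(e_{i,j})$ — interrupt every would-be match, so no repeated substring can cross a separator; this caps the length at $3$ and identifies the six length-$3$ segments. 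I expect this verification to be the delicate part, since it must also exclude matches originating in the vertex blocks of the neighbours $v_j,v_h,v_k$, whose own copies of the $e$-symbols must be shown not to extend to length $4$.

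Granting the claim, the bound is immediate. The six length-$3$ segments split into three pairs — $\{x_{i,p}e_{i,j,1}e_{i,j,2},\,e_{i,j,1}e_{i,j,2}z_{i,3}\}$ and its two analogues — whose members overlap, while segments from different pairs are disjoint; hence any disjoint family contains at most $a\le 3$ length-$3$ segments, say with $b$ further (length-$2$) segments. Since the covered positions number at most $17$ we have $3a+2b\le 17$, so $\sum_k(L_k-1)=2a+b\le\frac{a+17}{2}\le 10$, giving cost $\ge 17-10=7$ and proving (1).

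For (2), a cost of $7$ forces equality everywhere: $a=3$, $b=4$, and $3a+2b=17$, so there are no losses and exactly one length-$3$ segment is chosen from each pair. With nothing lost, $z_{i,1}$ and $z_{i,8}$ must be covered, forcing the segments $z_{i,1}z_{i,2}$ and $z_{i,7}z_{i,8}$; since $z_{i,7}z_{i,8}$ already uses $z_{i,7}$, the pair around $i\text{-}enc_{i,k}$ must contribute $x_{i,p''}e_{i,k,1}e_{i,k,2}$ rather than $e_{i,k,1}e_{i,k,2}z_{i,7}$. A short check of the $2\times 2$ remaining choices for the pairs around $i\text{-}enc_{i,j}$ and $i\text{-}enc_{i,h}$ then shows that only taking the three segments $i\text{-}enc_{i,j}$, $i\text{-}enc_{i,h}$, $i\text{-}enc_{i,k}$ leaves the uncovered positions equal to the two disjoint pairs $z_{i,3}z_{i,4}$ and $z_{i,5}z_{i,6}$; every other combination strands a position whose only length-$2$ neighbours are already taken. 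The resulting labeling uses exactly the four $z$-pairs from $B_{\gx,A,1}(v_i)$ and the three encodings from the edge blocks, i.e. it is a \tya{} for $B_{\gx-VE}(v_i)$, which proves (2).
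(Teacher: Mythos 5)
Your proposal is correct, and for part (2) it takes a genuinely different route from the paper's. For part (1) the two arguments are counting arguments of the same flavour: the paper notes that the extremal unmatched characters of $B_{\gx-VE}(v_i)$ lie only in duplications of length at most $2$ and that the remaining (at least) $13$ characters need at least $\lceil 13/3\rceil=5$ further operations, giving $7$; your reformulation, cost $=17-\sum_k(L_k-1)$ plus the packing constraints $3a+2b\le 17$ and $a\le 3$, yields the same bound. The real divergence is in part (2). The paper argues by cases on the \emph{origin} of the duplications (only from the edge blocks and $B_{\gx,A,1}(v_i)$; only from $B_{\gx,A,2}(v_i)$; mixed) and, in the mixed case, runs a left-and-right propagation converting the labeling into a \tyb{} of no greater cost, so that anything other than a \tya{} costs at least $8$. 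You instead read uniqueness off the tightness of the packing bound: cost $7$ forces $a=3$, $b=4$ and no losses, and the forced coverage of $z_{i,1}$ and $z_{i,8}$ then cascades through the three overlapping pairs to pin down the whole family, which is exactly the \tya. This is cleaner and avoids the paper's somewhat delicate ``modify $L$ into $L'$ without increasing the cost'' step, but it rests entirely on the inventory of duplicatable segments (no repeated substring of length $\ge 4$ meeting $B_{\gx-VE}(v_i)$; exactly six length-$3$ segments forming three overlapping pairs; $z_{i,1}$ and $z_{i,8}$ coverable only via $z_{i,1}z_{i,2}$ and $z_{i,7}z_{i,8}$), which you assert but do not verify. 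That inventory is in fact correct --- the separators $s_i$, $w_{i,\cdot}$, $u_{i,\cdot}$, $s_{e,i,j}$ and the opposite-endpoint symbols $x_{j,q}$ block every longer match, including matches through $B_{\gx-VE}(v_j)$ and $B_{\gx,A,2}(v_j)$, which share only $e_{i,j,1}e_{i,j,2}$ with $B_{\gx-VE}(v_i)$ --- so this is a verification burden comparable to what the paper itself delegates to ``by construction'', not a gap. One minor point: the cost function is stated only for lengths $z>1$ while Lemma~\ref{lem:sol-loc-edges} mentions length-$1$ duplications; your argument is unaffected either way, since a length-$1$ duplication contributes $0$ to $\sum_k(L_k-1)$, exactly like a loss.
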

\begin{proof}
The proof that any feasible labeling of  $B_{\gx-VE}(v_i)$ needs a
cost of at least $7$ follows from a simple counting argument.
Notice that $B_{\gx-VE}(v_i)$ contains $17$ unmatched characters and that
$B_{\gx-VE}(v_i)$ is labeled by duplications of length at most $3$.
By construction, any feasible labeling of $B_{\gx-VE}(v_i)$ can define a duplication
of length at most $2$ that contains the leftmost unmatched character
of $B_{\gx-VE}(v_i)$. The same property holds for the rightmost unmatched character of $B_{\gx-VE}(v_i)$.
Hence, consider the unmatched characters of $B_{\gx-VE}(v_i)$ 
not labeled by one of these two labelings of the rightmost and leftmost characters 
of $B_{\gx-VE}(v_i)$.
Those characters of $B_{\gx-VE}(v_i)$ are at least $13$, 
and since each duplication has length at most $3$, it follows that
at least $\lceil \frac{13}{3} \rceil = 5$ duplications are required for  labeling 
these $13$ unmatched characters of $B_{\gx-VE}(v_i)$. This implies an overall cost of at least
$7$ for any labeling of $B_{\gx-VE}(v_i)$.

Now, we prove that if a feasible labeling  of $B_{\gx-VE}(v_i)$ has a cost
of $7$, then such a feasible labeling
must be a $\tya$ of $B_{\gx-VE}(v_i)$.
First, notice that if a labeling of $B_{\gx-VE}(v_i)$ contains only
duplications from  $B_{\gx-VE}(e_{i,j})$, $B_{\gx-VE}(e_{i,h})$,
$B_{\gx-VE}(e_{i,k})$, $B_{\gx,A,1}(v_i)$, then it has a cost of $7$ if and only if
is a $\tya$. Indeed, a $\tya$ is the only labeling that consists only 
of maximal duplications from $B_{\gx-VE}(e_{i,j})$, $B_{\gx-VE}(e_{i,h})$,
$B_{\gx-VE}(e_{i,k})$, $B_{\gx,A,1}(v_i)$ to $B_{\gx-VE}(v_i)$.

Now, assume that a labeling of $B_{\gx-VE}(v_i)$ contains only
duplications from  $B_{\gx,A,2}(v_i)$. 
A $\tyb$ is the only labeling of $B_{\gx-VE}(v_i)$ that 
consists only of maximal duplications from  
$B_{\gx,A,2}(v_i)$, hence
any other labeling of $B_{\gx-VE}(v_i)$ that contains only
duplications from  $B_{\gx,A,2}(v_i)$ requires a cost of at least $8$.

Hence, assume that a labeling $L$ of $B_{\gx-VE}(v_i)$ contains
duplications from $B_{\gx,A,2}(v_i)$ and from some of
$B_{\gx-VE}(e_{i,j})$, $B_{\gx-VE}(e_{i,h})$,
$B_{\gx-VE}(e_{i,k})$, $B_{\gx,A,1}(v_i)$.
Consider a substring $s$ of $B_{\gx-VE}(v_i)$ labeled by
a duplication from a substring $t$ of $B_{\gx,A,2}(v_i)$. First, notice that 
if this duplication is not maximal,
we can extend this duplication as a maximal duplication from a substring $s'$ that includes $s$ to a substring $t'$ 
that includes $t$, without increasing the cost of the labeling.
Notice that then $s'$ is labeled as in a $\tyb$.

Now, we show how to modify $L$ into a labeling $L'$, which is a $\tyb$, without increasing 
the cost of the solution.
$L'$ defines a labeling of $B_{\gx-VE}(v_i)$ by iterating the following procedure.
Denote with $s^*$ be the substring of $B_{\gx-VE}(v_i)$ already labeled by $L'$ in the procedure.
First $s^*= s'$, that is $L'$ labels the string $s'$ as a duplication from $t'$.
Then, consider the unmatched character $\alpha$ of $B_{\gx-VE}(v_i)$ on the left of $s^*$ (if it exists). 
If $\alpha \neq z_1$, 
$L'$ defines a maximal duplication from a substring of $B_{\gx,A,2}(v_i)$ 
to a substring $s''$ on the left of $s^*$ (as in $\tyb$ solution);
if $\alpha = z_1$, $L'$ labels $\alpha$ as a loss.
Similarly, consider the character $\beta$ on the right of $s^*$ (if it exists). 
If $ \beta \neq z_8$, $L'$ defines a maximal duplication from a substring of $B_{\gx,A,2}(v_i)$ 
to a substring of $B_{\gx-VE}(v_i)$ on the right of $s'$ (as in $\tyb$ solution);
if $\beta = z_8$, $L'$ labels $\beta$ as a loss.

Iterating this procedure, we define a labeling $L'$ having the same
cost as $L$, since at each step of the iteration, 
the cost of $L'$ with respect to $L$ is never increased. 
Indeed, consider an unmatched character adjacent to $s^*$, assume w.l.o.g. that this character $\alpha$ is on the left
of $s^*$. $L$ labels $\alpha$ with some label whose cost has not been considered
in previous iterations. At each step the procedure defines a duplication of
maximal length having $\alpha$ as right endpoint.
Indeed, by construction maximal duplications
from $B_{\gx,A,2}(v_i)$ and from
$B_{\gx-VE}(e_{i,j})$, $B_{\gx-VE}(e_{i,h})$,
$B_{\gx-VE}(e_{i,k})$, $B_{\gx,A,1}(v_i)$ have different  start and ending positions
in $B_{\gx-VE}(v_i)$ (except for the rightmost and the leftmost unmatched characters of $B_{\gx-VE}(v_i)$).

Since $L'$ is a $\tyb$ $B_{\gx-VE}(v_i)$ , and $L$ has the same cost of $L'$, 
it follows that $L$ has a cost of at least $8$.
\end{proof}

Now, we prove a property on the labeling of a block 
$B_{\gx-VE}(e_{i,j})$.

\begin{lemma}
\label{lem:sol-loc-edges}
Let $G=(V,E)$ be an instance of MVCC and let $(\gx,\gy)$ be the 
corresponding instance of MLA. Then,
each feasible  alignment of $B_{\gx-VE}(e_{i,j})$ has a
cost of at least $2$; 
furthermore, if an alignment of $B_{\gx-VE}(e_{i,j})$ has a cost of $2$, 
then $B_{\gx-VE}(e_{i,j})$ is labeled
with one duplication from $B_{\gx-VE}(v_i)$
or with one duplication from $B_{\gx-VE}(v_j)$.
\end{lemma}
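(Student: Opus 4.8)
The plan is to treat the statement as two parts: a counting lower bound, and a characterisation of the cost-$2$ labelings. Recall that $B_{\gx-VE}(e_{i,j}) = s_{e,i,j}\, x_{i,p}\, e_{i,j,1}\, e_{i,j,2}\, x_{j,q}$ has exactly four unmatched characters, forming the contiguous substring $x_{i,p} e_{i,j,1} e_{i,j,2} x_{j,q}$, since $s_{e,i,j}$ is matched. Any labeling therefore partitions these four characters into contiguous runs, each run being either a duplication (length at least $2$, cost $1$) or a loss (cost equal to its length).

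For the lower bound I would first observe that the whole substring $x_{i,p} e_{i,j,1} e_{i,j,2} x_{j,q}$ does not recur elsewhere in $\gx$: in $B_{\gx-VE}(v_i)$ the character following $e_{i,j,2}$ is some $z$, and in $B_{\gx-VE}(v_j)$ the character preceding $e_{i,j,1}$ is some $z$, so no occurrence is flanked simultaneously by $x_{i,p}$ and $x_{j,q}$. Hence a single run covering all four characters can only be a loss, of cost $4$. Every other labeling uses at least two runs, each of cost at least $1$, giving cost at least $2$; this settles part (1). For the characterisation I would then enumerate the cost-$2$ labelings. By the counting above these use exactly two runs, and inspecting the contiguous $2$-run compositions of a length-$4$ block shows the only options of cost $2$ are: a length-$3$ duplication of $x_{i,p} e_{i,j,1} e_{i,j,2} = i-enc_{i,j}$ together with a length-$1$ loss of $x_{j,q}$; a length-$1$ loss of $x_{i,p}$ together with a length-$3$ duplication of $e_{i,j,1} e_{i,j,2} x_{j,q} = j-enc_{i,j}$; or two length-$2$ duplications, of $x_{i,p}e_{i,j,1}$ and of $e_{i,j,2}x_{j,q}$.

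The auxiliary fact I would establish is that each of the substrings $i-enc_{i,j}$, $j-enc_{i,j}$, $x_{i,p}e_{i,j,1}$ and $e_{i,j,2}x_{j,q}$ recurs only inside $B_{\gx-VE}(v_i)$ (the first and third) or inside $B_{\gx-VE}(v_j)$ (the second and fourth); in the A-part the characters $x_{i,p}$ and $e_{i,j,1}e_{i,j,2}$ are always separated by a $u$-character, and the $e$-pair is always followed by a $z$-character, so none of these substrings occurs there. Consequently the source of every duplication appearing in a cost-$2$ labeling lies in $B_{\gx-VE}(v_i)$ or in $B_{\gx-VE}(v_j)$, which is exactly the claim and matches the two configurations of Remark \ref{Rem:SolE}.

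The main obstacle is the third configuration, the two length-$2$ duplications $x_{i,p}e_{i,j,1}$ (from $v_i$) and $e_{i,j,2}x_{j,q}$ (from $v_j$): it also has cost $2$ and is feasible, so it cannot be discarded on cost grounds alone. I would resolve it exactly as in the proof of Lemma \ref{lem:sol-loc-vertex}: since the cost of a duplication is independent of its length, I may assume without loss of generality that every duplication is maximal, extending it at no extra cost. Neither $x_{i,p}e_{i,j,1}$ nor $e_{i,j,2}x_{j,q}$ is maximal, each extending across $e_{i,j,2}$ (respectively $e_{i,j,1}$) to the corresponding length-$3$ encoding, and the two maximal extensions overlap on the $e$-characters and hence cannot coexist. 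Extending one of them therefore forces the other to lose its $e$-character and collapse to a single-character loss, turning the configuration into one of the first two. This pins the cost-$2$ labeling down to a single duplication of $i-enc_{i,j}$ from $B_{\gx-VE}(v_i)$, or of $j-enc_{i,j}$ from $B_{\gx-VE}(v_j)$, plus one loss. I expect the only delicate point to be verifying the uniqueness-of-occurrence claims above, which requires a careful scan of the $v$-, $e$-, and A-blocks but is otherwise routine.
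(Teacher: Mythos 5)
Your argument is correct and rests on the same key fact as the paper's proof: an occurrence analysis showing that, outside $B_{\gx-VE}(e_{i,j})$ itself, every substring of $x_{i,p}e_{i,j,1}e_{i,j,2}x_{j,q}$ of length at least $2$ containing an $x$-character occurs only in $B_{\gx-VE}(v_i)$ or in $B_{\gx-VE}(v_j)$ (the interleaved $u$-characters killing the A-part occurrences). You package this as a direct enumeration of the cost-$2$ labelings, while the paper argues the contrapositive --- if no duplication comes from either vertex block, the only usable multi-character source is $e_{i,j,1}e_{i,j,2}$ from the A-part, leaving each $x$-character to cost $1$ on its own, hence cost at least $3$ --- but these are the same argument read in opposite directions. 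The one point to fix is your last paragraph: the $(2,2)$ configuration is not an obstacle at all, since the lemma only asserts the (inclusive) disjunction ``one duplication from $B_{\gx-VE}(v_i)$ or one from $B_{\gx-VE}(v_j)$,'' and that configuration satisfies both disjuncts; your enumeration together with the occurrence analysis already completes the proof. The maximality/WLOG step you add there is not only unnecessary but also not legitimate for this purpose: the lemma is a statement about an arbitrary \emph{given} cost-$2$ labeling, and replacing it by a modified labeling proves nothing about the original. It also leads you to overstate the conclusion as an exact characterization (that only the two labelings of Remark~\ref{Rem:SolE} have cost $2$), which is false precisely because the $(2,2)$ labeling exists and the lemma does not claim it. Deleting that paragraph leaves a complete and correct proof.
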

\begin{proof}
Consider the block $B_{\gx-VE}(e_{i,j})$. 
By construction, since $B_{\gx-VE}(e_{i,j})$ contains $4$ unmatched characters 
and since there is no other substring in 
$\gx$ that is identical to $B_{\gx-VE}(e_{i,j})$, 
it follows that any labeling of $B_{\gx-VE}(e_{i,j})$ requires a cost of at least $2$. 

Now, assume that $B_{\gx-VE}(e_{i,j})$ is not labeled by a duplication 
from $B_{\gx-VE}(v_i)$ or from $B_{\gx-VE}(v_j)$. 
It follows that either each character of $B_{\gx-VE}(e_{i,j})$ is labeled as a loss (hence
the cost of such labeling is exactly $4$)
or the substring $e_{i,j,1} , e_{i,j,2}$ of $B_{\gx-VE}(e_{i,j})$
is labeled as a duplication from $B_{\gx,A,2}(v_i)$. By construction, 
this implies that the leftmost unmatched character of
$B_{\gx-VE}(e_{i,j})$ is either a duplication of length $1$ or a loss, 
and similarly, the rightmost unmatched character of
$B_{\gx-VE}(e_{i,j})$ is either a duplication of length $1$ or a loss. 
Hence this labeling of $B_{\gx-VE}(e_{i,j})$ has a cost of $3$. 
\end{proof}

Now, we are ready to prove the two main properties of the reduction in Lemma \ref{lem:hardness1}
and in Lemma \ref{lem:hardness2}.

\begin{lemma}
\label{lem:hardness1}
Let $G$ be an instance of MVCC and let $(\gx,\gy)$ be the 
corresponding instance of MLA. Then,
given a vertex cover $V' \subseteq V$ of $G$, we can compute in
polynomial time a solution of MLA over instance $(\gx,\gy)$
of cost at most $8 |V'| + 7 |V \setminus V'| + 2 |E|$.
\end{lemma}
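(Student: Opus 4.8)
The plan is to read the cover $V'$ directly into a labeling $L$ of $\gx$ and then bound its cost and check its feasibility. I would label each block $B_{\gx-VE}(v_i)$ with $v_i\in V'$ by a \tyb, and each block $B_{\gx-VE}(v_i)$ with $v_i\in V\setminus V'$ by a \tya; by Remark~\ref{Rem:SolV} these have cost $8$ and $7$ respectively, and each labels all $17$ unmatched characters of the corresponding block. For each edge $\{v_i,v_j\}\in E$ I would use the cost-$2$ labeling of Remark~\ref{Rem:SolE}, taking as the source of its single duplication an endpoint lying in $V'$; since $V'$ is a cover, $\{v_i,v_j\}\cap V'\neq\emptyset$, so such an endpoint always exists (if only one endpoint lies in $V'$ the source is forced to it, if both do I pick $v_i$). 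Summing over the blocks gives a cost of exactly $8|V'|+7|V\setminus V'|+2|E|$, and the whole construction is plainly computable in polynomial time.

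The real work is to verify that $L$ is feasible, i.e.\ that its duplications induce no cycle, and I would argue this by a layering of the blocks. Every duplication used in a \tya\ or a \tyb\ has its source in the A-part ($B_{\gx,A,1}(v_i)$ or $B_{\gx,A,2}(v_i)$), the only exception being the three duplications of a \tya, whose sources are the incident edge blocks; and every edge labeling duplicates from a vertex block into its own block. Orienting each duplication from its source block to its target block, and noting that source and target always lie in distinct blocks, the only arcs among the VE-blocks are $B_{\gx-VE}(v_m)\to B_{\gx-VE}(e_{i,j})$ (from an edge labeling) and $B_{\gx-VE}(e_{i,j})\to B_{\gx-VE}(v_m)$ (from a \tya\ of the endpoint $v_m$). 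By the choice of source above, every arc of the first kind has $v_m\in V'$ (a \tyb\ block) as its source, and every arc of the second kind has $v_m\in V\setminus V'$ (a \tya\ block) as its target; hence a \tya\ block is never an edge source---it is a sink among the VE-blocks---and a \tyb\ block is never an edge target. The arcs therefore respect the strict layering
\[
\text{A-part}\ \prec\ V'\text{-blocks}\ \prec\ \text{edge blocks}\ \prec\ (V\setminus V')\text{-blocks},
\]
so the block digraph is acyclic and $L$ is feasible; the losses create no arcs, and the character ``lost'' in each edge block is ancestral, hence still available as the source of the \tya\ of the other endpoint.

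The step I expect to be the main obstacle is precisely this feasibility check, and within it the place where the cover property is essential: were an edge block sourced from a \tya\ endpoint $v_m$, the two arcs $v_m\to B_{\gx-VE}(e_{i,j})$ and $B_{\gx-VE}(e_{i,j})\to v_m$ would close a $2$-cycle, so it is crucial that the cover $V'$ supply a \tyb\ endpoint to source every edge. Once feasibility is established, Remarks~\ref{Rem:SolV} and~\ref{Rem:SolE} guarantee that $L$ labels every unmatched character of $\gx$, which together with the cost computation above yields the claimed bound.
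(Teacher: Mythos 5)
Your proposal is correct and follows essentially the same route as the paper: the identical labeling (\tyb\ for $V'$, \tya\ for $V\setminus V'$, each edge block sourced from a $V'$ endpoint), the same cost count, and the same feasibility observation that \tyb\ blocks draw only from the A-part while \tya\ blocks are sinks among the VE-blocks. Your explicit layering of the block digraph is just a cleaner packaging of the paper's (terser) acyclicity argument.
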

\begin{proof}
Let $V'$ be a cover of $G$. We define a solution of MLA over instance
$(\gx,\gy)$ by labeling $\gx$.
First we define the following labeling of block $B_{\gx-VE}(v_i)$, for each $v_i \in V$:
\begin{itemize}
\item for each $v_i \in V'$, define a $\tyb$ for the corresponding 
block $B_{\gx-VE}(v_i)$ (hence the labeling of this block has a cost of $8$, see Remark \ref{Rem:SolV});

\item for each $v_i \in V \setminus V'$, define a $\tya$ for the corresponding 
block $B_{\gx-VE}(v_i)$ (hence the labeling of this block has a cost of $8$, see Remark \ref{Rem:SolV});

\end{itemize}

Now, for each $\{ v_i, v_j \} \in E$ (assume w.l.o.g. $i < j$), 
we define a labeling of the corresponding block $B_{\gx-VE}(e_{i,j})$ as follows:
\begin{itemize}
\item if $v_i \in V'$, define a duplication 
from $B_{\gx-VE}(v_i)$ to $B_{\gx-VE}(e_{i,j})$
(more precisely a duplication for the rightmost three unmatched characters of
$B_{\gx-VE}(e_{i,j})$) and a loss for the leftmost unmatched character of $B_{\gx-VE}(e_{i,j})$;
\item else (notice that in this case $v_j$ must be in $V'$), 
define a duplication from $B_{\gx-VE}(v_j)$ to $B_{\gx-VE}(e_{i,j})$
(more precisely a duplication for the leftmost unmatched characters of
$B_{\gx-VE}(e_{i,j})$) and a loss for the rightmost unmatched character of $B_{\gx-VE}(e_{i,j})$.
\end{itemize}

Notice that, since $V'$ is a vertex cover of $G$, at least one
of $v_i, v_j \in V'$, hence this labeling is always possible.

Now, we show that this labeling is feasible (that is no cycle is induced by the labeling). 
By construction, a block $B_{\gx-VE}(v_i)$ has a duplication coming from a block
$B_{\gx-VE}(e_{i,j})$, only if there is no other block of $\gx$ with a
duplication coming from $B_{\gx-VE}(v_i)$. In case 
a block $B_{\gx-VE}(v_i)$ has a duplication coming from a block
$B_{\gx-VE}(e_{i,j})$, the labeling of $B_{\gx-VE}(e_{i,j})$
defines a duplication from $B_{\gx-VE}(v_j)$ to $B_{\gx-VE}(e_{i,j})$, and
$B_{\gx-VE}(v_j)$ has duplications coming only from
$B_{\gx,A,2}(v_i)$, which does not need any labeling hence it has no incoming arc.
Hence, no cycle is induced by this labeling.
\end{proof}

\begin{lemma}
\label{lem:hardness2}
Let $G$ be an instance of MVCC and let $(\gx,\gy)$ be the 
corresponding instance of MLA. Then,
given a feasible labeling of $(\gx,\gy)$ 
of cost $8 p + 7 (|V| -p) + 2 |E|$, we can compute in
polynomial time a vertex cover of $G$ of size at most $p$.
\end{lemma}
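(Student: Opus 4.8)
The plan is to invert the construction of Lemma~\ref{lem:hardness1}: starting from a feasible labeling of cost exactly $8p + 7(|V|-p) + 2|E|$, I would argue that this cost is tight in a way that forces each block to be labeled in one of the two canonical ways, and then read off the vertex cover from which blocks use a \tyb. The first step is to establish a cost lower bound that matches the given cost. By Lemma~\ref{lem:sol-loc-vertex}, every block $B_{\gx-VE}(v_i)$ costs at least $7$, and by Lemma~\ref{lem:sol-loc-edges}, every block $B_{\gx-VE}(e_{i,j})$ costs at least $2$. Summing over all $|V|$ vertex blocks and all $|E|$ edge blocks gives a global lower bound of $7|V| + 2|E|$. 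Since the given labeling has cost $8p + 7(|V|-p) + 2|E| = 7|V| + p + 2|E|$, exactly $p$ units of cost are spent \emph{above} the minimum, and since any vertex block that is not labeled by a \tya\ costs at least $8$ (i.e.\ at least one unit above $7$, again by Lemma~\ref{lem:sol-loc-vertex}(2)), \textbf{at most $p$ vertex blocks fail to be a \tya}.

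The second step is to define the candidate cover. Let $V'$ be the set of vertices $v_i$ whose block $B_{\gx-VE}(v_i)$ is \emph{not} labeled by a \tya; by the counting above, $|V'| \le p$. I would then show $V'$ is a vertex cover. Suppose for contradiction some edge $\{v_i,v_j\}$ is uncovered, i.e.\ both $B_{\gx-VE}(v_i)$ and $B_{\gx-VE}(v_j)$ are labeled as \tya. Here the key is the structural content of Remark~\ref{Rem:SolV} and Remark~\ref{Rem:SolE}: a \tya\ for $B_{\gx-VE}(v_i)$ pulls the substring $i\text{-}enc_{i,j}$ by a duplication \emph{coming from} $B_{\gx-VE}(e_{i,j})$, so the arc points from the edge block into the vertex block, meaning $B_{\gx-VE}(e_{i,j})$ cannot in turn receive its cost-$2$ duplication \emph{from} $B_{\gx-VE}(v_i)$ (that would create a $2$-cycle, violating feasibility). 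The same holds on the $v_j$ side. By Lemma~\ref{lem:sol-loc-edges}, the only cost-$2$ labelings of $B_{\gx-VE}(e_{i,j})$ are a duplication from $B_{\gx-VE}(v_i)$ or from $B_{\gx-VE}(v_j)$; both being excluded forces $B_{\gx-VE}(e_{i,j})$ to cost at least $3$, i.e.\ one unit above its minimum.

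The third step is to account for this extra cost and derive the contradiction. Each uncovered edge contributes at least one extra unit beyond $2$, and each non-\tya\ vertex block contributes at least one extra unit beyond $7$. Writing the total cost as $7|V| + 2|E| + (\text{excess})$, the excess is at least $|V'| + (\text{number of uncovered edges})$. But the given total cost equals $7|V| + 2|E| + p$, so the excess is exactly $p$; combined with $|V'| \ge (\text{number of non-\tya\ blocks})$ this forces the number of uncovered edges to be $0$, so $V'$ is indeed a cover, and moreover $|V'| \le p$. \textbf{The main obstacle} I anticipate is the feasibility/cycle argument in the second step: I must verify carefully that a \tya\ on \emph{both} endpoints genuinely forbids \emph{every} cost-$2$ labeling of the shared edge block, which requires checking that the duplication arcs implied by the two \tya\ labelings, together with either cost-$2$ edge labeling, close a cycle — and that no alternative cost-$2$ labeling (e.g.\ sourcing characters from the A-part) slips through. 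This is exactly where Lemma~\ref{lem:sol-loc-edges}'s classification of all cost-$2$ edge labelings is essential, and I would lean on it to close the case cleanly rather than re-deriving the block structure by hand.
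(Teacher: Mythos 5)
Your overall strategy (a global counting argument on the ``excess'' cost above the lower bound $7|V|+2|E|$, rather than the paper's step-by-step modification of the labeling) is viable, and your first two steps are sound: the per-block lower bounds do add up because no duplication or loss can cross a block boundary, any non-\tya\ vertex block costs at least $8$ by Lemma~\ref{lem:sol-loc-vertex}(2), and an edge block both of whose endpoints carry a \tya\ must cost at least $3$ by Lemma~\ref{lem:sol-loc-edges} together with the two-cycle obstruction. The genuine gap is in your final deduction. From ``excess $\geq |V'| + u$'' (where $u$ is the number of uncovered edges) and ``excess $= p$'' you conclude that $u=0$; this does not follow. The inequalities only give $|V'| + u \leq p$, which is perfectly consistent with $u > 0$ and $|V'| < p$ (e.g.\ $p=2$, $|V'|=1$, $u=1$). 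So the set $V'$ you defined need not be a vertex cover, and the proof as written does not establish the lemma.

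The repair is short and stays within your framework: since $|V'| + u \leq p$, augment $V'$ by one arbitrary endpoint of each uncovered edge to obtain $V''$ with $|V''| \leq |V'| + u \leq p$, and $V''$ is a vertex cover; this proves the statement. (Alternatively, you could adopt the paper's route, which avoids the issue by first \emph{modifying} the labeling in polynomial time --- switching one endpoint of each uncovered edge from a \tya\ to a \tyb\ and relabeling the edge block, which never increases the cost --- so that afterwards every edge block has a \tyb\ endpoint and the set of \tyb\ vertices is directly a cover of size at most $p$. Your counting argument buys a cleaner, non-constructive accounting, but you must remember that the cover it yields is $V'$ plus the patch for uncovered edges, not $V'$ itself.)
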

\begin{proof}
Let $L$ be a feasible labeling of $(\gx,\gy)$ 
of cost $8 p + 7 (|V| -p) + 2 |E|$.
First, we consider the labeling of each block
$B_{\gx-VE}(v_i)$, with $v_i \in V$. 
By Lemma \ref{lem:sol-loc-vertex}, we can assume that
$B_{\gx-VE}(v_i)$ is either a $\tya$ or a $\tyb$. Indeed,
if the cost of the labeling of $B_{\gx-VE}(v_i)$ is $7$,
then by Lemma \ref{lem:sol-loc-vertex}, it must be a
$\tya$. If the cost of the labeling of $B_{\gx-VE}(v_i)$ is greater than
$7$, then we can modify (in polynomial time) the labeling of $B_{\gx-VE}(v_i)$
so that it is a $\tyb$ solution. Notice that this modification does not
induce any cycle in $L$, since it defines duplications from $B_{\gx,A,2}(v_i)$
to $B_{\gx-VE}(v_i)$, and $B_{\gx,A,2}(v_i)$ does not need any labeling, hence it 
has no incoming arc.  

Now, consider a block $B_{\gx-VE}(e_{i,j})$, with $\{ v_i, v_j \} \in E$.
We show that we can assume that at least one of $B_{\gx-VE}(v_i)$, $B_{\gx-VE}(v_j)$
has a $\tyb$ in $L$.
Assume to the contrary that both $B_{\gx-VE}(v_i)$, $B_{\gx-VE}(v_j)$
have both a $\tya$. Then by Lemma \ref{lem:sol-loc-edges}, the cost of 
the labeling of $B_{\gx-VE}(e_{i,j})$ has a cost of at least $3$, as 
$B_{\gx-VE}(e_{i,j})$ obviously cannot contain duplications
from $B_{\gx-VE}(v_i)$, $B_{\gx-VE}(v_j)$, otherwise $L$
would induce a cycle and it would not be feasible. 
Now, starting from $L$, we compute in polynomial time a feasible labeling $L'$
such that $c(L') \leq c(L)$, 
as follows: we define a $\tyb$ for one of 
$B_{\gx-VE}(v_i)$, $B_{\gx-VE}(v_j)$, w.l.o.g. $B_{\gx-VE}(v_i)$, 
and we define a duplication from $B_{\gx-VE}(v_i)$ to $B_{\gx-VE}(e_{i,j})$
(for the substring $i-enc_{i,j}$, and a loss for the character $x_{j,q}$, $1 \leq q \leq 3$, 
of $B_{\gx-VE}(e_{i,j})$ not labeled as a duplication from $B_{\gx-VE}(v_i)$. 
Notice that, since $L$ is feasible, the labeling $L'$ is feasible, 
since $B_{\gx-VE}(v_i)$ is a $\tyb$,
hence the duplications of $B_{\gx-VE}(v_i)$ come from 
$B_{\gx,A,2}(v_i)$, that does not have any label and no incoming arc. 
Furthermore, notice that $c(L') \leq c(L)$, 
since we have increased of $1$ the cost of the labeling of 
$B_{\gx-VE}(v_i)$, changing from a $\tya$ to a $\tyb$,
while we have decreased of at least $1$ the cost of labeling $B_{\gx-VE}(e_{i,j})$.

As a consequence we can assume that $L$ is a feasible labeling with the following properties:
(1) each block $B_{\gx-VE}(v_i)$ has either a $\tya$ or a $\tyb$; 
(2) for each block $B_{\gx-VE}(e_{i,j})$, at least one of $B_{\gx-VE}(v_i)$, $B_{\gx-VE}(v_j)$
has a $\tyb$. 
We define a vertex cover $V'$ of $G$ as follows:
\[V' = \{ v_i :  B_{\gx-VE}(v_i) \text{ has a } \tyb\}\]

Since for each $B_{\gx-VE}(e_{i,j})$ at least one of $B_{\gx-VE}(v_i)$, 
$B_{\gx-VE}(v_j)$ has a $\tyb$, it follows that
$V'$ is a vertex cover of $G$. 
Furthermore, since the cost of $L$ is at most $8p+7(|V|-p)+2|E|$, it follows
that $|V'| \leq p$.
\end{proof}

\begin{Thm}
\label{teo:hardness}
MLA is APX-hard.
\end{Thm}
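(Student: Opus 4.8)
The plan is to verify that the construction $G \mapsto (\gx,\gy)$ described above, together with the recovery map of Lemma \ref{lem:hardness2}, forms an $L$-reduction from MVCC to MLA; since MVCC is APX-hard \cite{AK92} and $L$-reductions preserve APX-hardness, this gives the theorem. Concretely, I would take $f$ to be the polynomial-time construction of $(\gx,\gy)$, and $g$ to be the polynomial-time map sending a feasible labeling $L$ of cost $C$ to the vertex cover produced in Lemma \ref{lem:hardness2}, applied with $p = C - 7|V| - 2|E|$ (an integer that is $\ge 0$ because every feasible labeling costs at least $7|V|+2|E|$). It then remains to exhibit constants $\alpha,\beta>0$ with $\mathrm{opt}_{MLA}(\gx,\gy) \le \alpha\,\mathrm{opt}_{VC}(G)$ and $|\,\mathrm{opt}_{VC}(G) - |g(L)|\,| \le \beta\,|\,\mathrm{opt}_{MLA}(\gx,\gy) - C\,|$ for every feasible $L$.

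The first step is to pin down the MLA optimum exactly. Applying Lemma \ref{lem:hardness1} to a minimum vertex cover gives $\mathrm{opt}_{MLA}(\gx,\gy) \le 7|V| + 2|E| + \mathrm{opt}_{VC}(G)$. Conversely, every feasible labeling costs at least $7|V| + 2|E|$ (Lemma \ref{lem:sol-loc-vertex} forces cost $\ge 7$ on each vertex block and Lemma \ref{lem:sol-loc-edges} forces cost $\ge 2$ on each edge block), and Lemma \ref{lem:hardness2} turns an optimal labeling of cost $\mathrm{opt}_{MLA}$ into a cover of size $\mathrm{opt}_{MLA} - 7|V| - 2|E| \ge \mathrm{opt}_{VC}(G)$. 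Combining the two directions yields the exact identity $\mathrm{opt}_{MLA}(\gx,\gy) = 7|V| + 2|E| + \mathrm{opt}_{VC}(G)$, and since $G$ is cubic we have $2|E| = 3|V|$, so $\mathrm{opt}_{MLA} = 10|V| + \mathrm{opt}_{VC}(G)$.

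For the first $L$-reduction inequality I would use cubicity again: a cover of size $k$ meets at most $3k$ edges, so covering all $|E| = \tfrac{3}{2}|V|$ edges forces $\mathrm{opt}_{VC}(G) \ge \tfrac{1}{2}|V|$, i.e.\ $|V| \le 2\,\mathrm{opt}_{VC}(G)$. Substituting into the identity above gives $\mathrm{opt}_{MLA} = 10|V| + \mathrm{opt}_{VC} \le 21\,\mathrm{opt}_{VC}(G)$, so $\alpha = 21$ works. For the second inequality, note that for any feasible $L$ of cost $C$ we have $C \ge \mathrm{opt}_{MLA}$, and $g(L)$ is a cover of size at most $C - 7|V| - 2|E| = C - 10|V|$; hence $|g(L)| - \mathrm{opt}_{VC}(G) \le (C - 10|V|) - (\mathrm{opt}_{MLA} - 10|V|) = C - \mathrm{opt}_{MLA} = |C - \mathrm{opt}_{MLA}|$, so $\beta = 1$ works.

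I expect the only real subtlety to be obtaining the exact identity $\mathrm{opt}_{MLA} = 7|V| + 2|E| + \mathrm{opt}_{VC}$ rather than a one-sided bound: both Lemma \ref{lem:hardness1} (for $\le$) and Lemma \ref{lem:hardness2} (for $\ge$) are needed, and it is this tightness, combined with the cubic bound $|V| \le 2\,\mathrm{opt}_{VC}$, that makes the two $L$-reduction inequalities go through with absolute constants $\alpha=21$, $\beta=1$. The remaining verifications (polynomiality of $f$ and $g$, and that each symbol occurs at most five times in the constructed instance, which strengthens the statement) are routine bookkeeping on the block definitions.
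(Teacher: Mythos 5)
Your proposal is correct and follows essentially the same route as the paper: the paper's proof is a one-line sketch invoking Lemmas \ref{lem:hardness1} and \ref{lem:hardness2} together with the cubic-graph facts $|E|=\frac{3}{2}|V|$ and a linear lower bound on the vertex cover size, and you have simply written out the standard $L$-reduction verification those ingredients are meant to support (your bound $\mathrm{opt}_{VC}\ge |V|/2$ is a slightly sharper, equally valid version of the paper's $|V|/4$). No substantive difference.
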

\begin{proof}
The proof follows from Lemma \ref{lem:hardness1} and from Lemma \ref{lem:hardness2},
and from the observation that in a cubic graph $|E|= \frac{3}{2}|V|$
and a vertex cover has size at least $\frac{|V|}{4}$.
\end{proof}

\end{document}